\documentclass{article}
\usepackage[english]{babel}
\usepackage{amsthm}
\usepackage{enumerate}
\usepackage{mathtools}
\usepackage{amsfonts}
\usepackage{csquotes}
\usepackage{fullpage}
\usepackage[sorting=nyt,style=numeric,maxbibnames=10,backend=biber]{biblatex}
\usepackage[hidelinks]{hyperref}
\usepackage[noabbrev,capitalise,nameinlink]{cleveref}
\usepackage{tabu}
\usepackage{siunitx}
\usepackage{xfp} 
\usepackage{subcaption}
\usepackage{caption}
\usepackage{microtype}
\usepackage{xspace}
\usepackage{diagbox}
\usepackage{paralist}
\usepackage{nicefrac}

\addbibresource{bibliography.bib}

\AtEveryBibitem{
    \clearfield{file}
    \clearfield{month}
    \clearfield{day}
}
\AtEveryBibitem{%
  \ifentrytype{online}{}{%
    \clearfield{urldate}%
    \clearfield{urlyear}%
    \clearfield{urlmonth}%
  }%
}
\AtEveryBibitem{%
  \iffieldundef{doi}{}{%
    \clearfield{issn}%
    \clearfield{isbn}%
  }%
}

\newcommand{\mO}{\mathcal{O}}

\newcommand{\N}{\mathbb{N}}
\newcommand{\R}{\mathbb{R}}
\newcommand{\cro}{\textup{cr}}
\newcommand{\crk}{\textup{cr}_k}

\newcommand{\gcrk}{\overline{\overline{\textup{cr}}}_k}
\newcommand{\gcr}{\overline{\overline{\textup{cr}}}}
\newcommand{\bcrk}{\textup{bkcr}_k}

\newcommand{\col}{\chi}
\newcommand{\ps}{P} 
\newcommand{\hm}{m} 

\newcommand{\lset}[2]{S_{#1}^\ell(#2)}
\newcommand{\rset}[2]{S_{#1}^r(#2)}
\newcommand{\dset}[2]{S_{#1}^d(#2)}

\newcommand{\cbar}{\overline{c}}
\newcommand{\cprime}{c'}

\newcommand{\gcrknumber}{geometric $k$-colored crossing number\xspace}

\newcommand{\gcrkconstant}{geometric $k$-colored crossing constant\xspace}

\newtheorem{theorem}{Theorem}

\newtheorem{claim}{Claim}
\newtheorem{lemma}{Lemma}

\title{On the \gcrknumber of $K_n$}
\author{Benedikt Hahn, Bettina Klinz, and Birgit Vogtenhuber\\[0.2ex]
{\small{{\tt benedikt.hahn@tugraz.at}, {\tt klinz@math.tugraz.at}, {\tt birgit.vogtenhuber@tugraz.at}}
}\\[0.5ex] {{Graz University of Technology, Austria}}}
\date{\hfill}

\begin{document}
\maketitle

\begin{abstract}
  We study the \emph{\gcrknumber} of complete graphs $\gcrk(K_n)$, which is the smallest number of monochromatic crossings in any $k$-edge colored straight-line drawing of $K_n$.
  We substantially improve asymptotic upper bounds on $\gcrk(K_n)$ for $k=2,\ldots, 10$ by developing a procedure for general $k$ that derives $k$-edge colored drawings of $K_n$ for arbitrarily large $n$ from initial drawings with a low number of monochromatic crossings.
  We obtain the latter by heuristic search, employing a \textsc{MAX-$k$-CUT}-formulation of a subproblem in the process.
\end{abstract}

\section{Introduction}
A \emph{drawing $\Gamma$ of a graph $G$} is a representation of $G$ in $\mathbb{R}^2$ where vertices are represented as distinct points,
edges are represented as simple continuous curves connecting their endpoints, and no curve passes through the representation of a vertex.
For simplicity, we assume that no two curves share more than finitely points or a tangent point and that no three edges share a point in their relative interior.
A \emph{crossing} in $\Gamma$ is a point in the relative interior of two curves.
For brevity, we will mostly refer to the elements of $\Gamma$ as vertices and edges.

Crossing minimization for non-planar graphs is of great interest from both a theoretical and a practical point of view.
The \emph{crossing number} $\cro(G)$ of $G$ is the minimum number of crossings $\cro(\Gamma)$ in any drawing $\Gamma$ of~$G$.
A plethora of variants of crossing numbers have been studied; see for example the survey of Schaefer~\cite{schaeferGraphCrossingNumber2024}.
Despite intensive research, various important problems --- such as determining $\cro(K_n)$ --- still remain open for the \enquote{original} crossing number, and the same holds true for many relevant variants.
One such variant and the topic of this work is the \emph{\gcrknumber} $\gcrk(G)$.
It is defined as\\[-1ex]
\begin{equation}
    \label{def:rkccn}
    \gcrk(G) \coloneqq \min_{\Gamma} \min_{G=G_1 \cup \ldots \cup G_k} \sum_{i = 1}^{k} \cro(\Gamma|_{G_i}),
\end{equation}
where $\Gamma$ ranges over all \emph{straight-line drawings} of $G$ (i.e., drawings in which the edges are straight-line segments).
Equivalently, $\gcrk(G)$ is the minimum number of \emph{monochromatic crossings} in any \emph{straight-line $k$-edge-colored drawing of~$G$}.
Straight-line drawings are also called geometric graphs, which motivates the name \emph{\gcrknumber}
\footnote{
We use the notation of Schaefer~\cite{schaeferGraphCrossingNumber2024} for $\gcrk(G)$, but a different name than in previous literature
for the following reasons: We do not write \enquote{geometric $k$-planar crossing number} as in~\cite{pachNoteKplanarCrossing2018} to avoid confusion with the concept of
$k$-planar graphs, and we do not write \enquote{rectilinear $k$-colored crossing number} as in~\cite{aichholzer2ColoredCrossingNumber2019,fabila-monroyNoteKcoloredCrossing2025} to avoid confusion with the related but different \enquote{rectilinear $k$-planar crossing number} and to highlight the relation to geometric thickness.
}.
The \gcrknumber is closely related to \emph{geometric thickness}, which is the minimum $k$ such that $\gcrk(G)=0$.

For $k=1$, $\gcrk(G)$ is the classical \emph{rectilinear crossing number} of a graph (mostly denoted by $\overline{\textup{cr}}(G)$).
Determining $\gcr_1(G)$ is $\exists \R$-hard~\cite{bienstockProvablyHardCrossing1991,deanMathematicalProgrammingFormulation2002} and exact values for $\gcrk(G)$ are known only for few graph classes.
In particular, despite intensive research, $\gcr_1(K_n)$ is still unknown for $n\ge 31$ and there is no candidate for a closed formula.
On the positive side, the \emph{rectilinear crossing constant} $\gcr_1 \coloneqq \lim_{n \to \infty} \gcr_1(K_n)/\binom{n}{4}$ is known to exist (see for example~\cite{richterRelationsCrossingNumbers1997}).
Its bounds meanwhile have been narrowed to
\begin{equation}
    \label{eq:cr1_bounds}
    0.37997 \leq \gcr_1 \leq 0.38045
\end{equation}
(see~\cite{abrego$leqK$EdgesCrossings2011} and the arXiv-version of~\cite{aichholzerOngoingProjectImprove2020}).
The upper bound employs a construction from~\cite{abregoGeometricDrawings$K_n$2007} that generates drawings of $K_n$ with arbitrarily large $n$ and small rectilinear crossing number given an initial drawing $\Gamma$ with few crossings and a so-called \emph{halving matching} of $\Gamma$ (a matching between vertices with incident halving edges).

For fixed $k \geq 2$, the \emph{\gcrkconstant} $\gcr_k \coloneqq \lim_{n \to \infty} \gcrk(K_n)/\binom{n}{4}$ exists as well (with an identical proof as for $\gcr_1$ in~\cite{richterRelationsCrossingNumbers1997}).
For $k=2$, the previously best known bounds on $\gcr_2$ are 
\begin{equation}
    \label{eq:cr2_bounds}
    \frac{1}{33} = 0.\bar{03} \leq \gcr_2 \leq 0.11798016,
\end{equation}
both of which were shown in~\cite{aichholzer2ColoredCrossingNumber2019}.
The upper bound is obtained by a generalized notion of halving matchings and a construction based on the approach from~\cite{abregoGeometricDrawings$K_n$2007}, but is specifically tailored to two colors.

For $k\geq 3$, bounds on $\gcr_k$ are derived by the following: For any $k\geq 1$ and any graph $G$, $\gcrk(G)$ is bounded from below by the \emph{$k$-colored crossing number} $\crk$ which is defined as in \labelcref{def:rkccn} but with $\Gamma$ ranging over all possible drawings of $K_n$ instead of only straight-line drawings.
On the other hand, $\gcrk(G)$ is bounded from above by the \emph{$k$-page book crossing number} $\bcrk(G)$, also defined as in \labelcref{def:rkccn} but with $\Gamma$ restricted to drawings of $G$ with the $n$ vertices in convex position.
From the existence of $\crk:= \crk(K_n)/\binom{n}{4}$ and $\bcrk:= \bcrk(K_n)/\binom{n}{4}$, combined with the best known asymptotic bounds on $\crk(K_n)$ and $\bcrk(K_n)$, we obtain
\begin{equation}
    \label{eq:crk_bounds}
    \frac{3}{29k^2} \leq \crk \leq \gcrk \leq \bcrk \leq \frac{2}{k^2} - \frac{1}{k^3}.
\end{equation}
The lower bound in \labelcref{eq:crk_bounds} stems from~\cite{shavaliBiplanar$k$PlanarCrossing2022} and is via an application of the Crossing Lemma~\cite{ackermanTopologicalGraphsMost2019}.
The upper bound follows independently from two different constructions~\cite{damianiUpperBoundCrossing1994,shahrokhiBookCrossingNumber1996} as noticed in~\cite{deklerkImprovedLowerBounds2013}.

\paragraph*{Our Contribution.}
In this work, we develop a technique to improve the upper bounds of $\gcrk$ for any fixed $k\geq 2$ by generalizing the approach from~\cite{aichholzer2ColoredCrossingNumber2019} to $k\geq 2$ and by improving it for $k=2$.
To this end, we also find provably \emph{optimal matchings} for any $k\geq 2$ (for $k=1$, the matchings from~\cite{abregoGeometricDrawings$K_n$2007} were known to be optimal; the ones used in~\cite{aichholzer2ColoredCrossingNumber2019} were not).
We exemplify our approach for $k = 2, \ldots, 10$ (using heuristic search methods for initial drawings and colorings) and obtain substantially improved upper bounds for $\gcr_2,\ldots \gcr_{10}$.

\paragraph*{Further related work.}
The \gcrknumber first appeared in~\cite{pachNoteKplanarCrossing2018}.
In the literature, the \gcrknumber has also been treated in the context of the \emph{$k$-colored crossing ratio of a drawing $\Gamma$}, that is, the ratio between the $k$-colored crossing number $\crk(\Gamma)$ of $\Gamma$ and $\cro(\Gamma)$.
The authors of~\cite{aichholzer2ColoredCrossingNumber2019} proved that there is some constant $c > 0$ such that for all large enough $n$ and all straight-line drawings $\Gamma$ of $K_n$, $\crk(\Gamma)/\cro(\Gamma) \leq 1/2 - c$.
In~\cite{fabila-monroyNoteKcoloredCrossing2025} this is generalized to any $k$ and to dense graphs.
In~\cite{cabelloNote2coloredRectilinear2024} it is shown that for $n$ points chosen uniformly at random from a unit square, the induced straight-line drawing $\Gamma$ of $K_n$ has $\crk(\Gamma)/\cro(\Gamma) \leq 1/2 - 7/50$ in expectation.

The crossing properties of $\Gamma$ are captured by the \emph{crossing graph} of $\Gamma$, whose vertices are the edges of~$\Gamma$ and in which two vertices are adjacent if they cross.
As the total number of uncolored crossings in $\Gamma$ is fixed, a $k$-edge coloring of $\Gamma$ realizing $\cro_k(\Gamma)$ is equivalent to a $k$-vertex-coloring of the crossing graph of $\Gamma$ that maximizes adjacencies between differently colored vertices, i.e., a \emph{maximum $k$-cut}.
The problem \textsc{MAX-$k$-CUT} is $\mathcal{NP}$-hard and also hard to approximate in general~\cite{gareySimplifiedNPcompleteGraph1976,kannHardnessApproximatingMax1996}.
Moreover, it remains $\mathcal{NP}$-hard for segment intersection graphs~\cite{masudaCrossingMinimizationLinear1990} (and hence for crossing graphs of drawings)~\cite{masudaCrossingMinimizationLinear1990}.
On the other hand, there is a PTAS for \textsc{MAX-$k$-CUT} for dense graphs~\cite{aroraPolynomialTimeApproximation1999} (and hence for crossing graphs of drawings of $K_n$).
\pagebreak

\section{The doubling construction}
We consider straight-line drawings of $K_n$ given by some set of points $\ps \subseteq \R^2$ in general position with a $k$-edge-coloring $\col$.
We denote the number of monochromatic crossings in the resulting drawing by $\cro_k(P;\col)$.

We work with matchings, which match each vertex with an incident edge such that no edge is matched twice.
Formally, a \emph{matching} is a map $\hm:\ps \to \ps$ with $\hm(p) \neq p$, $\hm(\hm(p)) \neq p$ for all $p \in \ps$.
We call $p \hm(p)$ the \emph{matching edge of $p$} and think of it as being oriented away from $p$.
We denote the color of the matching edge of $p$ as $\cbar(p) \coloneqq \col(p \hm(p))$.
For each color $c$, the number of edges incident to $p$ with color $c$ that lie to the left (respectively right) of the line spanned by the matching edge of $p$ is denoted as $\lset{c}{p}$ (respectively $\rset{c}{p}$).

A \emph{$\chi$-halving matching} as defined in~\cite{aichholzer2ColoredCrossingNumber2019} is a matching with the additional property that for each point $p$, a color $c$ with the maximum number of incident edges at $p$ fulfills $\left| \lset{c}{p} - \rset{c}{p} \right| \leq 1$.

Given a point set $P_0$, a $k$-edge-coloring $\col_0$ and matching $\hm_0$, we construct a point set $P_1$ with $|P_1| = 2|P_0|$ together with a $k$-edge-coloring $\col_1$ and a matching $\hm_1$ in a way that can be iteratively repeated to obtain $\col_t$, $P_t$, and $\hm_t$ for any $t \in \N$ with few monochromatic crossings in the following way.

\noindent{\textbf{\emph{Point set:}}}
We replace each point $p \in P_0$ by the two points with distance $\varepsilon$ to $p$ on the line spanned by the edge $p \hm_0(p)$ for a sufficiently small positive $\varepsilon$ (i.e., such that no smaller $\varepsilon$ changes the order type of the point set).
The resulting points are the \emph{children of $p$}.
We denote them by $p_1$ and $p_2$ such that $p_1$ is further from $\hm_0(p)$ than $p_2$.
In turn, $p$ is the \emph{parent} of $p_1$ and $p_2$.
We further denote the left and right child of $\hm_0$ from the perspective of $p$ as $\hm_0(p)_\ell$ and $\hm_0(p)_r$.

\noindent{\textbf{\emph{Coloring and matching:}}}
We choose $\col_1(p_i q_j) = \col_0(pq)$ if $p \neq q$.
Independently for each vertex $p$, we decide on $\cprime(p) \coloneqq \col_1(p_1 p_2)$ and $\hm_1(p_1), \hm_1(p_2)$ and call these choices the \emph{details (at~$p$)}.
We restrict the choice of $\hm_1(p_1)$ and $\hm_1(p_2)$ to the children of $\hm_0(p)$ and $p_1, p_2$ and disallow $\hm_1(p_2) = p_1$, in order to preserve the rough structure of $\hm_0$; see \cref{fig:doubling_one_step} for an example.
We do not enforce any canonical method to choose the details but will later describe how details that optimize the asymptotic number of crossings can be found.
In contrast, the authors of~\cite{aichholzer2ColoredCrossingNumber2019} choose the details at a vertex $p$ according to a case distinction on the color of $\hm_0(p)$ and the values of $\lset{c}{p}$ and $\rset{c}{p}$, which is not always optimal.

\begin{figure}[h]
    \centering
    \begin{subfigure}[b]{0.45\textwidth}
        \centering
        \includegraphics[page=1]{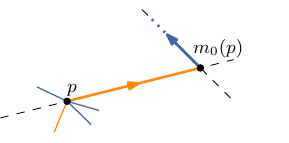}
    \end{subfigure}
	\hspace{1cm}
    \begin{subfigure}[b]{0.45\textwidth}
        \centering
        \includegraphics[page=2]{figures/doubling_one_step.pdf}
    \end{subfigure}        
    \caption{One step in the doubling procedure at a vertex $p$.
            Matching edges are drawn bold and with an arrowhead.
            Dashed lines are the extensions of matching edges along which the vertices are split.}
    \label{fig:doubling_one_step}
\end{figure}

\noindent\textbf{\emph{Iterated application:}}
Given $P_i$, $\col_i$, and $\hm_i$ from iteration $i$, we construct $P_{i+1}$, $\col_{i+1}$, and $\hm_{i+1}$ analogously to the first step.
In particular, each vertex in $P_i$ is a parent of two vertices in $P_{i+1}$.
Calling a vertex in some $P_i$ a \emph{descendant} of $p \in P_0$ if they are transitively related by the parent relation, the descendants of $p$ form an infinite full binary tree $T_p$ rooted at $p$.
We set the left child of $p$ to $p_1$ and the right child to $p_2$.
We denote by $p^i_j$ the vertex on level $i$ (thus in $P_i$) of $T_p$ at position $j \in \{1, \ldots, 2^i\}$ from left to right.
In this notation, $p = p^0_1$, $p_1 = p^1_1$, $p_2 = p^1_2$, and the two children of $p^i_j$ are $p^{i+1}_{2j - 1}$ and $p^{i+1}_{2j}$.

For each descendant $p^i_j \in P_i$ of $p \in P_0$, we choose the details at $p^i_j$ identically to those at $p$:
$\col_{i+1}(p^{i+1}_{2j-1} p^{i+1}_{2j}) = \col_1(p_1 p_2)$ and if $\hm_1(p_1) = \hm_0(p)_\ell$ then $\hm_{i+1}(p^{i+1}_{2j-1}) = \hm_i(p^i_j)_\ell$, if $\hm_1(p_1) = \hm_0(p)_r$ then $\hm_{i+1}(p^{i+1}_{2j-1}) = \hm_i(p^i_j)_r$, and if $\hm_1(p_1) = p_2$ then $\hm_{i+1}(p^{i+1}_{2j-1}) = p^{i+1}_{2j}$ (analogously for $p^{i+1}_{2j}$).
\pagebreak

\section{Analysis}
The following theorem counts the number of crossings after $t$ step of the doubling construction.
The proof can be found in \cref{app:proof_one_step_crossings} and is based on similar counting arguments as in~\cite[Claim 1]{aichholzer2ColoredCrossingNumber2019} and~\cite[Lemma 3]{abregoGeometricDrawings$K_n$2007}.

\begin{theorem}
    \label{thm:t_step_crossings}
	Given a point set $P_0$, a $k$-edge-coloring $\col_0$, a matching $\hm_0$, and details at all vertices of $P_0$, the number of monochromatic crossings after $t$ iterations of the doubling construction is
	\begin{equation}
       \label{fml:t_step_crossings}
	\begin{array}{rcl}
		\crk(P_t; \col_t) &=& 16^t\;\crk(P_0; \col_0) + \sum\limits_{i=0}^{t-1}16^{t-i-1}\left[\binom{2^i|P_0|}{2} - 2^i|P_0|\right] \\
		&&\ +\ 4 \sum\limits_{p \in P_0} \sum\limits_{c = 1}^{k} \sum\limits_{i=0}^{t-1}16^{t-i-1}\sum\limits_{j=1}^{2^i}\left[ \binom{\lset{c}{p^i_j}}{2} + \binom{\rset{c}{p^i_j}}{2} \right] \\
		&&\ +\ 2 \sum\limits_{p \in P_0} \sum\limits_{i=0}^{t-1}16^{t-i-1} \sum\limits_{j=1}^{2^i}\left[\lset{\cbar(p)}{p^i_j} + \rset{\cbar(p)}{p^i_j} \right].
	\end{array}
	\end{equation}
\end{theorem}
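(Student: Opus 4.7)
The plan is to establish a one-step identity and then iterate. Writing $n := |P_0|$, I aim to show that for any $(P_0, \col_0, \hm_0)$ with chosen details,
\begin{equation*}
\crk(P_1;\col_1) = 16\crk(P_0;\col_0) + \binom{n}{2} - n + 4\sum_{p \in P_0}\sum_{c=1}^{k}\left[\binom{\lset{c}{p}}{2} + \binom{\rset{c}{p}}{2}\right] + 2\sum_{p \in P_0}\left[\lset{\cbar(p)}{p} + \rset{\cbar(p)}{p}\right].
\end{equation*}
Applying this identity at each step $i \to i+1$, using $|P_i| = 2^i n$ and re-indexing $P_i = \{p^i_j : p \in P_0,\ 1 \leq j \leq 2^i\}$, yields a recurrence of the form $\crk(P_{i+1}; \col_{i+1}) = 16\crk(P_i; \col_i) + R_i$; unrolling $t$ times and invoking the inheritance rule (the matching line at each descendant $p^i_j$ aligns with the one at $p$ and the matching color is $\cbar(p)$) produces \labelcref{fml:t_step_crossings}.

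To prove the one-step identity, I would classify each monochromatic crossing of $(P_1, \col_1)$ by the number of distinct parents among the four endpoints of its two edges. \emph{Four distinct parents.} An $\varepsilon$-perturbation argument preserves the order type, so each of the $\crk(P_0;\col_0)$ crossings yields exactly $2^4 = 16$ same-color crossings. \emph{Two distinct parents} (both edges in the family $\{p_iq_j : i,j \in \{1,2\}\}$ for some unordered pair $\{p,q\}$): for unmatched $\{p,q\}$ (i.e., $q \neq \hm_0(p)$ and $p \neq \hm_0(q)$) the four children are in convex position and exactly one of the three non-sharing pairs crosses, contributing one monochromatic inherited crossing; for matched $\{p,q\}$ one child lies inside the triangle of the other three and no crossing occurs. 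Since $\hm_0(\hm_0(p)) \neq p$, the $n$ sets $\{p, \hm_0(p)\}$ are pairwise distinct, so $n$ pairs are matched and this case contributes $\binom{n}{2} - n$. A new edge $p_1 p_2$ shares $p_i$ with every inherited edge through $p$ and is $\Omega(1)$-separated from every edge $q_ir_j$ with $q, r \neq p$, so contributes no crossings.

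\emph{Three distinct parents} (edges $p_i q_j, p_k r_l$ with $q \neq r$ and $i \neq k$): placing $p$ at the origin with $\hm_0(p)$ on the positive $x$-axis, I would show via a convex-position argument that for small $\varepsilon$ the pair $(p_1 q_j, p_2 r_l)$ crosses iff the angle (around the origin) to $q_j$ is smaller than that to $r_l$. Thus when $q, r$ lie on opposite sides of the matching line, no crossings arise; when they lie on the same side, exactly one of $\{(p_1 q_j, p_2 r_l),\,(p_2 q_j, p_1 r_l)\}$ crosses per $(j,l) \in \{1,2\}^2$, producing $4$ monochromatic crossings per same-side same-color edge-pair $\{pq, pr\}$ and summing to $4\sum_c[\binom{\lset{c}{p}}{2} + \binom{\rset{c}{p}}{2}]$. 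In the boundary subcase where one edge is the matching edge $p\hm_0(p)$ itself, the endpoint $\hm_0(p)$ lies exactly on the matching line, so its children land on opposite sides; the angular criterion then yields a crossing only for the unique child of $\hm_0(p)$ placed on the same side as the side-edge's other endpoint $s$, giving $2$ crossings per same-color side-edge $ps$ and the final $2[\lset{\cbar(p)}{p} + \rset{\cbar(p)}{p}]$ summand.

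I expect the main obstacle to be the three-distinct-parents analysis, specifically establishing the angular crossing criterion uniformly for small $\varepsilon$ and separating the boundary matching-edge subcase (with coefficient $2$) from the generic interior subcase (with coefficient $4$) so that the two contributions combine exactly as in the formula. Once the one-step identity is in hand, the iteration is a direct unrolling.
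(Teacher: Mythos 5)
Your proposal is correct and follows essentially the same route as the paper's proof: a one-step counting identity obtained by classifying each crossing of $P_1$ by the number of distinct parents (four parents giving the $16\,\crk(P_0;\col_0)$ term, two parents giving $\binom{|P_0|}{2}-|P_0|$ with the matched pairs excluded because one child falls inside the triangle of the other three, and three parents split into the generic same-side case with coefficient $4$ and the matching-edge boundary case with coefficient $2$), followed by unrolling the resulting recurrence over $t$ iterations using the inherited details at descendants. The extra geometric detail you supply (the angular criterion, the $\varepsilon$-separation of the new edges $p_1p_2$) only elaborates steps the paper treats as immediate from the order-type-preserving choice of $\varepsilon$.
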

   
To determine the asymptotics of \labelcref{fml:t_step_crossings} for $t \to \infty$, we consider the values of $\dset{c}{p^i_j}$.
We reason that there exist offsets $o_1, o_2 \in \{0,1,2\}$ depending only on $p \in P_0$, $c \in \{1, \ldots,k\}$, and $d \in \{\ell,r\}$, such that
$$\dset{c}{p^{i+1}_{2j-1}} = 2 \cdot \dset{c}{p^i_j} + o_1 \qquad \text{ and } \qquad \dset{c}{p^{i+1}_{2j}} = 2 \cdot \dset{c}{p^i_j} + o_2$$
for all $i,j$.
As the doubling procedure behaves identically for all iterations, it is sufficient to make the following arguments for $i=0$.
The factor 2 appears because each edge $pq$, $q \neq \hm_0(p)$ gives rise to two edges incident to $q_1$ and two edges incident to $q_2$ with the same color as $pq$ and on the same side of the respective halving edges.
The offsets $o_1$ and $o_2$ stem from the six edges $p_1 p_2, p_1 q_1, p_1 q_2$ at $p_1$ and $p_2 p_1, p_2 q_1, p_2 q_2$ at $p_2$.
Two of these are the matching edges of $p_1$ and $p_2$ and do not count towards $\dset{c}{p_1}$ or $\dset{c}{p_2}$ for any $c$ and $d$, so $o_1, o_2 \leq 2$.
Further, $o_2 \leq 1$, as no choice for the matching edge of $p_2$ leaves the remaining two edges on the same side.
Finally, if $o_1 = 2$, then $o_2 \neq 0$ as is apparent from a short case distinction.
We show in \cref{app:reccurence_formulas} that the various values of $o_1$ and $o_2$ adhere the following five closed formulas for $\dset{c}{p^i_j}$, which we denote as 
$f_{(o_1,o_2)}(\dset{c}{p},i,j)$.

\begin{table}[h]
	\centering
	{\small{
    {\tabulinesep=.2mm
    \begin{tabu}{c|c|c|c}
        \diagbox[height=1.5em]{$o_2$}{$o_1$} & 0 & 1 & 2\\
        \hline
        0 & $2^i \dset{c}{p}$       & $2^i \dset{c}{p} + 2^i - j$ &  - \\
        1 & $2^i \dset{c}{p} + j-1$ & $2^i \dset{c}{p} + 2^i - 1$ & $2^i \dset{c}{p} + 2^{i+1} - j - 1$ \\
    \end{tabu}}
}}
\end{table}

\noindent Plugging these into \labelcref{fml:t_step_crossings}, we obtain
\begin{equation}
\begin{array}{rcl}
    \label{fml:t_step_crossings_rewritten}
        \crk(P_t; \col_t) &=& 2^{4t}\;\crk(P_0; \col_0) + \left[\sum\limits_{i=0}^{t-1} 16^{t-i-i} \left(\binom{2^i |P_0|}{2} - 2^i|P_0|\right)\right] \\
        &&+\ 4 \sum\limits_{p \in P_0} \sum\limits_{c = 1}^{k} \sum\limits_{d \in \{\ell,r\}} \left[\sum\limits_{i=0}^{t-1} 16^{t-i-i} \sum\limits_{j=1}^{2^i} \binom{f_{(o_1,o_2)}(\dset{c}{p},i,j)}{2}\right]\\
        &&+\ 2 \sum\limits_{p \in P_0} \sum\limits_{d \in \{\ell,r\}} \left[\sum\limits_{i=0}^{t-1} 16^{t-i-i} \sum\limits_{j=1}^{2^i} f_{(o_1,o_2)}(\dset{\cbar(p)}{p},i,j)\right],\\
\end{array}
\end{equation}

where $(o_1, o_2)$ depend on the current $p$, $c$ and $d$.
Let us denote the bracketed terms in \labelcref{fml:t_step_crossings_rewritten} by $A(|P_0|)$, $B_{(o_1,o_2)}(\dset{c}{p})$, and $C_{(o_1,o_2)}(\dset{c}{p})$, respectively.
Straightforward but long computations yield closed formulas for these eleven functions, each of which is of the form $2^{4t}p_4(x) + 2^{3t}p_3(x) + 2^{2t}p_2(x) + 2^{t}p_1(x)$ for polynomials $p_1, p_2, p_3, p_4$.
From the exact formulas, which can be found in \cref{app:closed_forms}, we obtain the following theorem.

\begin{theorem}
    \label{thm:nice_shape}
    Given a non-empty point set $P_0$, $|P_0| \geq 3$ and a $k$-edge-coloring $\col_0$, a matching $\hm_0$, and details at all vertices of $P_0$,
    there are $\alpha, \beta, \gamma, \delta \in \R, \alpha > 0, \beta < 0, \alpha + \beta + \gamma + \delta = \crk(P_0; \col_0)$ such that for any $t \in \N_0$
    $$\crk(P_t; \col_t) = \alpha 2^{4t} + \beta 2^{3t} + \gamma 2^{2t} + \delta 2^t.$$
\end{theorem}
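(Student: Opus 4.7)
The plan is to execute the nested sums in \labelcref{fml:t_step_crossings_rewritten} and then inspect the resulting coefficients for the three sign conditions.

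Each $f_{(o_1,o_2)}(\dset{c}{p}, i, j)$ in the preceding table is linear in $2^i$ and $j$, so $\binom{f_{(o_1,o_2)}}{2}$ has total degree two in these variables. Summing over $j \in \{1, \ldots, 2^i\}$ yields a polynomial in $2^i$ of degree at most three with no constant term; the analogous inner sum in $C_{(o_1,o_2)}$ gives degree two, and the summand of $A(|P_0|)$ is already a degree-two polynomial in $2^i$. The outer geometric sum $\sum_{i=0}^{t-1} 16^{t-i-1} \cdot 2^{ai}$ evaluates, for each $a \in \{1,2,3\}$, to a linear combination of $2^{4t}$ and $2^{at}$ with no constant term. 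Combining, $A$, each $B_{(o_1,o_2)}$, and each $C_{(o_1,o_2)}$ becomes a linear combination of $2^{4t}, 2^{3t}, 2^{2t}, 2^{t}$. Adding the initial $2^{4t}\,\crk(P_0;\col_0)$ and regrouping produces $\crk(P_t;\col_t) = \alpha 2^{4t} + \beta 2^{3t} + \gamma 2^{2t} + \delta 2^{t}$; setting $t = 0$, where all $\sum_{i=0}^{t-1}$ are empty, immediately gives $\alpha + \beta + \gamma + \delta = \crk(P_0;\col_0)$.

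For $\alpha > 0$, I would observe that the $2^{4t}$-coefficient of $A(|P_0|)$ simplifies to $\tfrac{|P_0|(7|P_0| - 18)}{168}$, which is strictly positive precisely for $|P_0| \geq 3$ (this is where the hypothesis on $|P_0|$ enters); all other contributions to $\alpha$ come from $\crk(P_0;\col_0) \geq 0$ and from the non-negative leading coefficients of $B$ and $C$, so $\alpha > 0$ overall. For $\beta < 0$, the crux of the theorem, I would note that neither $A$ (no $j$-dependence) nor $C$ (linear in $f$, so after summing $j$ only $2^{2i}$ and $2^i$ terms survive) contributes to the $2^{3t}$-coefficient; hence $\beta$ comes entirely from $B$. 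Writing $f_{(o_1,o_2)} = a\,2^i + b\,j + e$ as read off from the table, the $2^{3i}$-coefficient of $\sum_{j=1}^{2^i}\binom{f_{(o_1,o_2)}}{2}$ equals $\tfrac{3a^2 + 3ab + b^2}{6} \geq 0$, strictly positive whenever $(a, b) \neq (0, 0)$. Multiplying by the outer factor $-\tfrac{1}{8}$ (the $2^{3t}$-part of $\sum_i 16^{t-i-1} \cdot 2^{3i}$), by the prefactor $4$, and summing over $(p, c, d)$ yields $\beta = -\tfrac{1}{12}\sum_{p,c,d}(3a^2 + 3ab + b^2) \leq 0$. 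For strict negativity, I would argue structurally that at every $p \in P_0$ the sum $\sum_{(c,d)} o_1$ equals $2$ (counting the two non-matching new edges at $p_1$), so at least one triple has $o_1 \geq 1$, which forces $a \geq 1$ in that case and a strictly negative contribution.

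The main obstacle I anticipate is the tedious but mechanical bookkeeping that collapses all five $(o_1,o_2)$-cases of $B$ and $C$ into the eleven closed forms of \cref{app:closed_forms}; in particular, one must verify that no constant term survives the two geometric-sum evaluations and that sign contributions aggregate correctly. The only genuinely non-mechanical step is the structural observation that $\sum_{(c,d)} o_1 = 2$ at each $p$, which is precisely what guarantees $\beta < 0$ robustly across arbitrary initial details.
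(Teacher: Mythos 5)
Your argument is correct and follows the same skeleton as the paper's proof --- decompose \labelcref{fml:t_step_crossings_rewritten} into the $A$-, $B$- and $C$-terms, evaluate the geometric sums to obtain the form $\sum_a p_a 2^{at}$, set $t=0$ for the last identity, and read off signs --- but you carry out the sign analysis structurally where the paper delegates it to the eleven tabulated formulas of \cref{app:closed_forms}. This buys two things. First, your chain for $\beta<0$ (only the $B$-terms reach degree $3$ in $2^i$; the $2^{3i}$-coefficient of $\sum_j\binom{f}{2}$ equals $\frac{3a^2+3ab+b^2}{6}\ge 0$ with equality only for $a=b=0$; and $\sum_{(c,d)}o_1=2$ at every $p$ because exactly two of the three new edges at $p_1$ are non-matching, forcing $a\ge 1$ for some triple) closes a gap the paper leaves implicit: the tabulated $2^{3t}$-coefficients are only non-positive term by term (e.g.\ $B_{(0,0)}(0)=0$), so one still needs a reason why they cannot all vanish, which the paper does not spell out. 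Second, your general coefficient formula serves as a cross-check: it reproduces $B_{(0,0)}$, $B_{(1,1)}$, $B_{(2,1)}$ exactly but disagrees with the appendix in two entries, and direct evaluation at $t=2$ confirms your version --- the $2^{3t}$-coefficient of $B_{(0,1)}$ should be $-\bigl(\frac{x^2}{16}+\frac{x}{16}+\frac{1}{48}\bigr)$ rather than $-\bigl(\frac{x^2}{16}+\frac{x}{8}+\frac{1}{48}\bigr)$, and the $-\frac{1}{24}$ in $C_{(0,1)}$ belongs in the $2^{2t}$-column, consistent with your observation that $C$ contributes nothing to $\beta$. Neither discrepancy affects any sign, so the theorem stands under either reading. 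Two small points to tighten: when asserting that the leading coefficients of the $B$-terms are non-negative (e.g.\ $\frac{x^2}{16}-\frac{x}{24}$ for $B_{(0,0)}$) you must use that $x=\dset{c}{p}$ is a non-negative \emph{integer}, and the claim that each non-matching new edge at $p_1$ contributes to exactly one $\dset{c}{p_1}$ relies on general position together with $\hm_0(\hm_0(p))\neq p$, so that no such edge lies on the line spanned by the matching edge of $p_1$.
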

\noindent\emph{Proof.}
    The existence of such $\alpha, \beta, \gamma, \delta \in \R$ is a direct consequence of \cref{fml:t_step_crossings_rewritten} being a finite sum over the functions $A(|P_0|)$, $B_{(o_1,o_2)}(\dset{c}{p})$, and $C_{(o_1,o_2)}(\dset{c}{p})$, each of which has the desired form.
	Then, $\alpha > 0$ and $\beta < 0$ follow from the signs of the relevant coefficients in the closed formulas when $|P_0|\geq 3$. Finally, setting $t = 0$ implies $\alpha + \beta + \gamma + \delta = \crk(P_0; \col_0)$.
\qed

\paragraph*{Asymptotics and Optimal Matchings.} With \cref{thm:nice_shape} we can now bound the \gcrkconstant by
\[ \gcrk \leq \lim\limits_{t \to \infty} \frac{\crk(P_t; \col_t)}{\binom{|P_t|}{4}} = \lim\limits_{t \to \infty} \frac{\alpha 2^{4t} + \mO(2^{3t})}{\frac{|P_0|^4}{24}2^{4t} + \mO(2^{3t})} = \frac{24 \alpha}{|P_0|^4}.\]

To compute $\alpha$, one only needs to determine $S^d_c(p)$ and $(o_1,o_2)$ for each $(p,c,d)$ and sum the contributions to $\alpha$ involving the $A,B$ and $C$-terms from \labelcref{fml:t_step_crossings_rewritten}.
We call the sum of these terms involving $S^d_c(p)$ for fixed $p$ the \emph{local $\alpha$ (at $p$)}.

Given $P_0, \col_0$ and $\hm_0$, the details that minimize the total~$\alpha$ result from choosing, at each $p\in P_0$, the details that minimize the local $\alpha$.
Due to this independence, even if only $P_0$ and $\col_0$ are given, the optimal matching can be found efficiently: Define weights $w$ on $P_0^2$ by setting $w(p,q)$ to the minimum local $\alpha$ at $p$ over all details at $p$ if $\hm_0(p)=q$ was fixed.
Further, let $H = (P_0 \cup \binom{P_0}{2}, \{(p, pq)\,\mid\,p,q \in P_0, p \neq q\})$ be the bipartite graph with edge weights $w_H(p,pq) = w(p,q)$.
An optimum matching which minimizes $\alpha$ in the doubling procedure corresponds to a $P_0$-saturating matching (in the usual sense) in $H$ with minimum weight, which can be determined in polynomial time~\cite{kuhnHungarianMethodAssignment1955}.

\section{Computational results for \texorpdfstring{$k \in \{2, \ldots 10\}$}{k in 2,..., 10}}

We focussed on obtaining upper bounds for the \gcrkconstant $\gcrk$ for $k = 2\ldots,10$.
For $k=2$, the authors of~\cite{aichholzer2ColoredCrossingNumber2019} provide a set of 135 points $\mathcal{P}'_2$ with a 2-edge-coloring $\col'_2$ such that $\cro_2(\mathcal{P}'_2; \col'_2) = 1470756$.
Using their doubling procedure via halving matchings on this instance, they obtain $\gcr_2 < 0.11798016$.
For the same instance, using our doubling procedure we obtain a better bound of $\gcr_2 < 0.11750015$ given by the optimum (non-halving) matching found by the bipartite-matching approach described above.

For $k \geq 3$ the best known upper bounds on $\gcrk$ in \labelcref{eq:crk_bounds} are from the book-crossing number.

To improve these bounds, we searched for point sets $\mathcal{P}_k$ and $k$-edge-colorings $\col_k$ for $k \geq 2$ with small $\crk(\mathcal{P}_k; \col_k)$.
To this end, we employed various \textsc{MAX-$k$-CUT} heuristics implemented in~\cite{rodriguesdesousaComputationalStudyBranching2022}.
Starting with $\mathcal{P}'_2$, we ran the heuristics on the crossing graph of the induced drawing to find a $k$-edge-coloring with few monochromatic crossings.
Keeping the coloring fixed, we further reduced the number of monochromatic crossing by perturbing the points.
Iterating these two steps, we obtained point sets $\mathcal{P}_k$ and $k$-edge-colorings $\col_k$ for $k=2,\ldots,10$.
Finding optimum matchings and details at every vertex via the bipartite matching approach, the upper bounds on $\gcrk$ detailed in \cref{tab:results} are obtained.
These improve the bounds from the $k$-page book crossing number by a factor of about 3.
The point sets $\mathcal{P}_k$ and their edge-colorings and matchings together with a Python script that determines the derived upper bound on $\gcrk$ can be found in~\cite{hahnInstancesLowGeometric2025}.
Let us note that while the optimum matchings for our instances are not halving, $\lset{c}{p}$ and $\rset{c}{p}$ tend to have similar values for most points $p$ and colors $c$.

\begin{table}[htb]
    \centering
	{\small{
    \sisetup{
        table-alignment-mode = format,
        table-number-alignment = center,
        table-format =1.8,
        round-precision=8,
        round-mode=places
    }
    \begin{tabular}{c | c | S | S | S | S[round-precision=3]}
        $k$ & $\crk(\mathcal{P}_k; \col_k)$ & {\text{UB on }$\gcrk$\text{ from }$\mathcal{P}_k$} & \text{UB on }$\gcrk$\text{ from~\cite{aichholzer2ColoredCrossingNumber2019}} & {\text{UB on }$\gcrk$\text{ from }$\bcrk$} & {\text{Improvement factor}}\\
        \hline
        2  & 1468394 & 0.11731412216972345   & 0.11798016 & 0.375        & \num{\fpeval{1/(0.11731412216972345   / 0.11798016  )}} \\
        3  &  732746 & 0.060624658294281826  & {-}        & 0.1851851852 & \num{\fpeval{1/(0.06108546988282801   / 0.1851851852)}} \\
        4  &  413342 & 0.03572150651315414   & {-}        & 0.109375     & \num{\fpeval{1/(0.03572150651315414   / 0.109375    )}} \\
        5  &  264459 & 0.023893255829769575  & {-}        & 0.072        & \num{\fpeval{1/(0.023893255829769575  / 0.072       )}} \\
        6  &  183248 & 0.017260492241176078  & {-}        & 0.0509259259 & \num{\fpeval{1/(0.017260492241176078  / 0.0509259259)}} \\
        7  &  133405 & 0.013140791599766348  & {-}        & 0.0379008746 & \num{\fpeval{1/(0.013140791599766348  / 0.0379008746)}} \\
        8  &   99638 & 0.010283341868079967  & {-}        & 0.0292968750 & \num{\fpeval{1/(0.010283341868079967  / 0.0292968750)}} \\
        9  &   78269 & 0.00845338638585603   & {-}        & 0.0233196159 & \num{\fpeval{1/(0.00845338638585603   / 0.0233196159)}} \\
        10 &   60922 & 0.0069267072517390045 & {-}        & 0.019        & \num{\fpeval{1/(0.0069267072517390045 / 0.019       )}} \\
    \end{tabular}
	}}
    \caption{Upper bounds on the \gcrkconstant for $k = 2, \ldots, 10$.}
    \label{tab:results}
\end{table}

\section{Conclusion}

We introduced a procedure that creates $k$-edge-colored straight-line drawings of $K_n$ for large $n$ with few monochromatic crossings, given an initial drawing for small $n$ with this property.
By finding $k$-edge-colored drawings with few monochromatic crossings, we kickstarted this procedure to improve the upper bounds on the \gcrkconstant for $k = 2, \ldots, 10$.
While our method is applicable for larger $k$, gaining on the upper bound from the book crossing number in~\cite{damianiUpperBoundCrossing1994,shahrokhiBookCrossingNumber1996} for all $k$ at once is still open.
We believe this is possible with a construction that does not arrange the points in convex position.

The lower bound on $\gcrk$ in \labelcref{eq:crk_bounds} is unlikely to be improved using the Crossing Lemma (unless a better one is found).
A more promising avenue could be the study of $\ell$-edges and $\leq \ell$-edges in a similar fashion as in~\cite{abrego$leqK$EdgesCrossings2011} and previously~\cite{aichholzerNewLowerBounds2007} for the (non-colored) rectilinear crossing constant.

\vspace*{1em}

\noindent\textbf{Acknowledgements.}
We thank the authors of~\cite{rodriguesdesousaComputationalStudyBranching2022} for providing us with their source code by making it publicly available at~\cite{rodriguesdesousaMax$k$CutBranchingAlgorithm2020} and Oswin Aichholzer for providing us with source code used for computational results in~\cite{aichholzer2ColoredCrossingNumber2019}.

\printbibliography

\newpage
\appendix

\section{Proof of \texorpdfstring{\cref{thm:t_step_crossings}}{Theorem 1}}
\label{app:proof_one_step_crossings}

\setcounter{theorem}{0}
\begin{theorem}
	Given a point set $P_0$, a $k$-edge-coloring $\col_0$, a matching $\hm_0$, and details at all vertices of $P_0$, the number of monochromatic crossings after $t$ iterations of the doubling construction is
    \begin{align}
    \begin{split}
        \crk(P_t; \col_t) =&\ 16^t\;\crk(P_0; \col_0) + \sum_{i=0}^{t-1}16^{t-i-1}\left[\binom{2^i|P_0|}{2} - 2^i|P_0|\right] \\
        &\ +\ 4 \sum_{p \in P_0} \sum_{c = 1}^{k} \sum_{i=0}^{t-1}16^{t-i-1}\sum_{j=1}^{2^i}\left[ \binom{\lset{c}{p^i_j}}{2} + \binom{\rset{c}{p^i_j}}{2} \right] \\
        &\ +\ 2 \sum_{p \in P_0} \sum_{i=0}^{t-1}16^{t-i-1} \sum_{j=1}^{2^i}\left[\lset{\cbar(p)}{p^i_j} + \rset{\cbar(p)}{p^i_j}. \right]
    \end{split}
    \end{align}
\end{theorem}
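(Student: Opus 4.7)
The plan is to argue by induction on $t$. The base case $t = 0$ is immediate: both sums $\sum_{i=0}^{-1}$ are empty and the displayed formula reduces to $\crk(P_0;\col_0) = 16^0\,\crk(P_0;\col_0)$. For the inductive step, I will first establish the one-step recurrence
\begin{align*}
\crk(P_{t+1};\col_{t+1}) &= 16\,\crk(P_t;\col_t) + \left[\binom{|P_t|}{2} - |P_t|\right]\\
&\quad+ 4\sum_{p'\in P_t}\sum_{c=1}^{k}\left[\binom{\lset{c}{p'}}{2}+\binom{\rset{c}{p'}}{2}\right]\\
&\quad+ 2\sum_{p'\in P_t}\left[\lset{\cbar(p')}{p'}+\rset{\cbar(p')}{p'}\right],
\end{align*}
and then substitute the inductive hypothesis for $\crk(P_t;\col_t)$. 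Writing $P_t=\{p^t_j\}_{p\in P_0,\,1\leq j\leq 2^t}$ and $|P_t|=2^t|P_0|$, the substitution is a reindexing: all $i<t$ summands of the hypothesis pick up an extra factor of $16$, while the new one-step contributions form the $i=t$ term with coefficient $16^{(t+1)-t-1}=1$, yielding the claimed formula at level $t+1$.

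The one-step recurrence is proved by classifying monochromatic crossings of $P_{t+1}$ by the pair of $P_t$-edges from which they arise. \textbf{(A)} For two disjoint edges of $P_t$ that cross, for $\varepsilon$ sufficiently small all $16$ derived-edge pairs cross and inherit the parent colors; summing gives $16\,\crk(P_t;\col_t)$. \textbf{(B)} For a single non-matching edge $e=p'q'$ of $P_t$, a local coordinate computation aligning the matching line at $p'$ with the $x$-axis shows that exactly one of the two diagonals $\{p'_1q'_1,p'_2q'_2\}$, $\{p'_1q'_2,p'_2q'_1\}$ of its four derived edges crosses — which one being determined by the matching directions at $p'$ and $q'$, with general position excluding the degenerate parallel subcase — while for a matching edge neither diagonal crosses (both candidate intersections land outside the segments). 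This gives the $\binom{|P_t|}{2}-|P_t|$ term, every such crossing being monochromatic since the two edges share the parent color. \textbf{(C)} For a vertex $p'\in P_t$ and two non-matching edges $p'q,p'q'$ at $p'$ with $q,q'$ on the same side of the matching line, eight of the sixteen derived pairs share $p'_1$ or $p'_2$; among the remaining eight, which split by $(j,j')\in\{1,2\}^2$ into candidates $\{(p'_1q_j,p'_2q'_{j'}),(p'_2q_j,p'_1q'_{j'})\}$, exactly one crosses for each $(j,j')$, determined by the angular order of $q,q'$ as seen from $p'$. Summing over same-color pairs yields the $4\sum_{p'}\sum_c[\binom{\lset{c}{p'}}{2}+\binom{\rset{c}{p'}}{2}]$ term. \textbf{(D)} For the pair consisting of the matching edge $p'\hm_t(p')$ and a non-matching edge $p'q$ at $p'$, an analogous but asymmetric analysis gives exactly two crossings — both between the derived edge of $p'\hm_t(p')$ tilted toward $q$'s side of the matching line and the two derived edges of $p'q$ emanating from the opposite child of $p'$ — yielding the $2\sum_{p'}[\lset{\cbar(p')}{p'}+\rset{\cbar(p')}{p'}]$ term. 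Finally, the new short edges $p'_1p'_2$ contribute no crossings: any other edge either shares $p'_1$ or $p'_2$ or, by the general-position hypothesis on $P_t$ (no three points collinear), stays at positive distance from the length-$2\varepsilon$ segment for $\varepsilon$ small enough.

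The main obstacle will be the local perturbation analysis underlying cases (B), (C), and (D): setting up coordinates at each vertex $p'$, parametrizing each candidate pair of derived edges, and reading off the sign of the intersection parameter to leading order in $\varepsilon$. General position of $P_t$ must be invoked repeatedly to exclude degenerate parallel-line subcases, and a single sufficiently small $\varepsilon$ must be chosen to realize all these local counts simultaneously — this is possible by the ``$\varepsilon$ small enough to preserve the order type'' clause of the construction. A separate verification will handle the configurations arising in later iterations when the matching edge at $p'\in P_t$ is a short sibling edge, which can occur whenever the detail $\hm_1(p_1)=p_2$ is chosen at some ancestor; the Case-D count of two crossings survives there, but care is needed in identifying the relevant $\cbar$-color when matching the result to the statement. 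Once the four local counts are in hand, summing over $p'\in P_t$, rewriting in terms of $p^t_j$, and reindexing finishes the induction.
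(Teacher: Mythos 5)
Your proposal is correct and follows essentially the same route as the paper: it establishes the identical one-step recurrence by classifying the new crossings into the same four types (your cases (A), (B), (C), (D) are exactly the paper's Types III, I, IIa, IIb, with the same counts $16\,\crk(P_t;\col_t)$, $\binom{|P_t|}{2}-|P_t|$, $4\sum\sum[\binom{\lset{c}{p}}{2}+\binom{\rset{c}{p}}{2}]$, and $2\sum[\lset{\cbar(p)}{p}+\rset{\cbar(p)}{p}]$), and then obtains the $t$-step formula by induction, which is the paper's repeated expansion in reverse. The extra attention you give to the short sibling edges, general position of $P_t$, and the choice of a uniform $\varepsilon$ is more explicit than, but consistent with, the paper's argument.
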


\begin{proof}
    We first proof the following claim about the number of crossings after a single iteration.
    \begin{claim}
        \label{claim:one_step_crossings}
        Given a point set $P_0$, a $k$-edge-coloring $\col_0$, a matching $\hm_0$, and details at all vertices of $P_0$, the number of monochromatic crossings after one step of the doubling construction is
        \begin{align*}
        \begin{split}
            \crk(P_1; \col_1) =&\ 16\;\crk(P_0; \col_0) +\ \binom{|P_0|}{2} - |P_0| \\
            &\ +\ 4 \sum_{p \in P_0} \sum_{c = 1}^{k} \left[\binom{\lset{c}{p}}{2} + \binom{\rset{c}{p}}{2}\right] \\
            &\ +\ 2 \sum_{p \in P_0} \left[\lset{\cbar(p)}{p} + \rset{\cbar(p)}{p}\right]
        \end{split}
        \end{align*}
    \end{claim}
    \begin{proof}[Proof of \cref{claim:one_step_crossings}]
        Each crossing in $P_1$ is determined by a quadruple of points in convex position whose two diagonals have the same color. Four such points can be the children of either 2, 3 or 4 points of $P_0$, see \cref{fig:proof_counting_crossings}. 
        We count crossings based on these three types (I, II, and III).
        \begin{enumerate}[I.]
            \item[I.] For each pair of points $p,q \in P_0$, the points $p_1, p_2, q_1, q_2$ form a monochromatic crossing in $P_1$ under $\col_1$ except if $pq$ is the matching edge of $p$ or $q$. As there are $|P_0|$ matching edges in total, we have $\binom{|P_0|}{2} - |P_0|$ such crossings.
            \item[II.] Given $p\in P_0$, we investigate under which conditions $p_1, p_2$ yield a monochromatic crossing in $P_1$ with two vertices $q_i, r_j$ of different parents $q,r \in P_1$. To this end, distinguish two further subcases.
            \begin{enumerate}
                \item[IIa.] If $\hm_0(p) \notin \{q, r\}$, then $p_1, p_2, q_i, r_j$ form a crossing if and only if $q$ and $r$ lie on the same side of $p \hm_0(p)$ in $P_0$ and $\col_0(p q) = \col_0(p r)$. Since this is independent of $i$ and $j$, there are $4 \sum_{c = 1}^{k} \binom{\lset{c}{p}}{2} + \binom{\rset{c}{p}}{2}$ crossings for any given $p$.
                
                \item[IIb.] Otherwise, w.l.og.\ $\hm_0(p) = q$ and then $p_1, p_2, q_i, r_j$ form a crossing if $q_i$ and $r$ lie on the same side of $p \hm_0(p)$ and $\col_0(p r) = \col_0(p q) = \cbar(p)$. Since for each $r$ with $\col_0(p r) = \cbar(p)$ exactly one of the $q_i$ lies on its side of $p \hm_0(p)$, and there are two possibilities for $j$, we have a total of $2 (\lset{\cbar(p)}{p} + \rset{\cbar(p)}{p})$ such crossings at $p$.
            \end{enumerate}
                \item[III.] Finally, four points in $P_1$ that are children of distinct points in $P_0$ form a monochromatic crossing if and only if their parents do. In total, this yields $16\;\crk(P_0; \col_0)$ crossings.
        \end{enumerate}
        The formula is obtained by summing over the different types of crossings.
    \end{proof}
    \begin{figure}[htb]
        \centering
        \captionsetup[subfigure]{labelformat=empty}
        \begin{subfigure}{0.45\textwidth}
            \centering
            \includegraphics[page=1]{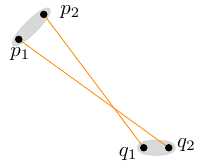}
            \caption{Type I}
        \end{subfigure}
        \begin{subfigure}{0.45\textwidth}
            \centering
            \includegraphics[page=2]{figures/crossing_counting_proof.pdf}
            \caption{Type IIa}
        \end{subfigure}
        \\
        \begin{subfigure}{0.45\textwidth}
            \centering
            \includegraphics[page=4]{figures/crossing_counting_proof.pdf}
            \caption{Type III}
        \end{subfigure}
        \begin{subfigure}{0.45\textwidth}
            \centering
            \includegraphics[page=3]{figures/crossing_counting_proof.pdf}
            \caption{Type IIb}
        \end{subfigure}
        \caption{The four cases we use to count the number of crossings after one step of the doubling construction.}
        \label{fig:proof_counting_crossings}
    \end{figure}

    By definition, subsequent iterations of the doubling construction behave analogously to the first and so the formula in \cref{claim:one_step_crossings} also holds if all mentions of $P_0$ and $P_1$ are replaced by $P_i$ and $P_{i+1}$ and in particular $P_{t-1}$ and $P_t$, yielding a formula for $\crk(P_t; \col_t)$.
    The claimed formula then follows by repeatedly expanding the $\crk(P_i;\col_i)$-terms for $i = t-1, t-2, \ldots, 1$.  
\end{proof}

\section{Proof of recurrence formulas}
\label{app:reccurence_formulas}

\begin{lemma}
    Let $T$ be an infinite full binary tree with children denoted left and right. Let further every node contain a real value $x^i_j$ where $i$ is the layer of the node (starting at 0) and $j \in \{1, \ldots, 2^i\}$ its position from left to right and let $x \coloneqq x^0_1$ be the value of the root node.
    If there exist values $o_1, o_2 \in \R$ such that for any $x^i_j$ the values of its children are
    $$x^{i+1}_{2j-1} = 2^ix + o_1 \qquad x^{i+1}_{2j} = 2^ix + o_2,$$
    then $x^i_j = 2^ix + o_1 (2^i - j) + o_2 (j - 1)$ for all $i \in \N_0, j \in \{1, \ldots 2^i\}$. In particular, we obtain the following closed formulas for $x^i_j$ for the indicated values of $o_1$ and $o_2$.
    \begin{table}[h]
        \centering
        {\tabulinesep=.2mm
        \begin{tabu}{c|c|c|c}
            \diagbox[height=1.5em]{$o_2$}{$o_1$} & 0 & 1 & 2\\
            \hline
            0 & $2^i x$       & $2^i x + 2^i - j$ &  - \\
            1 & $2^i x + j-1$ & $2^i x + 2^i - 1$ & $2^i x + 2^{i+1} - j - 1$ \\
        \end{tabu}}
    \end{table}
\end{lemma}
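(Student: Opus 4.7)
The plan is to prove the general closed form
$$x^i_j = 2^i x + o_1(2^i - j) + o_2(j-1)$$
by induction on the layer $i$, and then read off the five tabulated entries by direct substitution.

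For the base case $i = 0$, the only node is the root with $j = 1$, and the formula yields $x^0_1 = x + o_1 \cdot 0 + o_2 \cdot 0 = x$, as required. For the inductive step, I assume the formula holds throughout layer $i$ and consider a child $x^{i+1}_{2j-1}$, whose parent in the tree is $x^i_j$. Applying the recurrence in the form $x^{i+1}_{2j-1} = 2 x^i_j + o_1$ and substituting the induction hypothesis, I compute
$$x^{i+1}_{2j-1} = 2 \cdot 2^i x + 2 o_1 (2^i - j) + 2 o_2 (j-1) + o_1 = 2^{i+1} x + o_1\bigl(2^{i+1} - (2j-1)\bigr) + o_2\bigl((2j-1) - 1\bigr),$$
which is exactly the claimed formula evaluated at $(i+1, 2j-1)$. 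The right-child case $x^{i+1}_{2j} = 2 x^i_j + o_2$ is symmetric and produces $2^{i+1} x + o_1(2^{i+1} - 2j) + o_2(2j - 1)$, matching the formula at $(i+1, 2j)$. This closes the induction.

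For the table, I plug each of the five admissible pairs into the general formula and simplify in one line. For instance, $(o_1, o_2) = (1, 1)$ collapses to $2^i x + (2^i - j) + (j - 1) = 2^i x + 2^i - 1$, and $(o_1, o_2) = (2, 1)$ gives $2^i x + 2(2^i - j) + (j - 1) = 2^i x + 2^{i+1} - j - 1$; the remaining three entries $(0,0)$, $(1,0)$, $(0,1)$ are equally immediate. The cell $(o_1, o_2) = (2, 0)$ is left empty because, as argued in the paragraph above the table, this combination of offsets cannot arise in the doubling construction.

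There is no real obstacle in the argument: the induction is a routine algebraic verification in which the key cancellation is the telescoping identity $2(2^i - j) + 1 = 2^{i+1} - (2j-1)$ (and its companion $2(j-1) + 1 = (2j-1) - 1$). The table then follows by five one-line substitutions, making this essentially a bookkeeping lemma that packages the recursion into closed form for use in the asymptotic analysis of \cref{fml:t_step_crossings_rewritten}.
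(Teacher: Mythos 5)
Your proof is correct and takes essentially the same route as the paper's: induction on the layer $i$, with the identical algebraic step $x^{i+1}_{2j-1} = 2x^i_j + o_1 = 2^{i+1}x + o_1\bigl(2^{i+1}-(2j-1)\bigr) + o_2\bigl((2j-1)-1\bigr)$ and its mirror for the right child, followed by direct substitution for the table. You also correctly read the recurrence as $x^{i+1}_{2j-1} = 2x^i_j + o_1$ (the form actually used in the paper's own inductive step and in the main text), rather than the literal $2^i x + o_1$ appearing in the lemma statement.
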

\begin{proof}
    We use induction on the layer $i$.
    On layer 0, there is only the value $x^0_1 = x$, for which the formula clearly holds.
    Now, let $i$ be arbitrary and assume $x^i_j = 2^ix + o_1 (2^i - j) + o_2 (j - 1)$ for all nodes on layer $i$. Let $x^{i+1}_{2j-1}$ and $x^{i+1}_{2j}$ be the values of two arbitrary sibling nodes on layer $i+1$. Then, by induction
    $$x^{i+1}_{2j-1} = 2x^i_j + o_1 = 2^{i+1}x + o_1(2^{i+1} - 2j - 1) + o_2 (2j - 2)$$
    and $$x^{i+1}_{2j} = 2x^i_j + o_2 = 2^{i+1}x + o_1 (2^{i+1} -2j) + o_2 (2j - 1),$$ as desired.
\end{proof}

\pagebreak
\section{Closed form solutions for \texorpdfstring{$A, B$ and $C$}{A, B and C}}
\label{app:closed_forms}

\allowdisplaybreaks
\begin{alignat*}{9}
    A(x) &=&\ &2^{4t}\left(\frac{x^2}{24} - \frac{3x}{28}\right) && &&-2^{2t}\frac{x^2}{24} &&+ 2^t\frac{3x}{28} \\
    B_{(0,0)}(x) &=&&2^{4t}\left(\frac{x^2}{16} - \frac{x}{24}\right) &&-2^{3t}\frac{x^2}{16} &&+ 2^{2t}\frac{x}{24} &&\\
    B_{(0,1)}(x) &=&&2^{4t}\left(\frac{x^2}{16} - \frac{x}{48} + \frac{1}{336}\right) &&-2^{3t}\left(\frac{x^2}{16} + \frac{x}{8} + \frac{1}{48}\right) &&+ 2^{2t}\left(\frac{x}{12} + \frac{1}{24}\right) &&- 2^t \frac{1}{42}\\
    B_{(1,0)}(x) &=&& B_{(0,1)}(x)&& && && \\
    B_{(1,1)}(x) &=&&2^{4t}\left(\frac{x^2}{16} + \frac{1}{112}\right) &&-2^{3t}\left(\frac{x^2}{16} + \frac{x}{8} + \frac{1}{16}\right) &&+ 2^{2t}\left(\frac{x}{8} + \frac{1}{8}\right) &&- 2^t \frac{1}{14} \\
    B_{(2,1)}(x) &=&&2^{4t}\left(\frac{x^2}{16} + \frac{x}{48} + \frac{3}{112}\right) &&-2^{3t}\left(\frac{x^2}{16} + \frac{3x}{16} + \frac{7}{48}\right) &&+ 2^{2t}\left(\frac{x}{6} + \frac{1}{4}\right) &&- 2^t \frac{11}{84}\\
    C_{(0,0)}(x) &=&&2^{4t}\frac{x}{12} && &&-2^{2t}\frac{x}{12} &&\\
    C_{(0,1)}(x) &=&&2^{4t}\left(\frac{x}{12} + \frac{1}{168}\right) && -2^{3t}\frac{1}{24}&&-2^{2t}\frac{x}{12} &&+ 2^t \frac{1}{28}\\
    C_{(1,0)}(x) &=&&C_{(0,1)}(x) && && &&\\
    C_{(1,1)}(x) &=&&2^{4t}\left(\frac{x}{12} + \frac{1}{84}\right) && &&-2^{2t}\left(\frac{x}{12} + \frac{1}{12}\right) &&+ 2^t \frac{1}{14}\\
    C_{(2,1)}(x) &=&&2^{4t}\left(\frac{x}{12} + \frac{1}{56}\right) && &&-2^{2t}\left(\frac{x}{12} + \frac{1}{8}\right) &&+ 2^t \frac{3}{28}\\
\end{alignat*}
\end{document}